\newtheorem{theorem}{Theorem}
\newtheorem{lemma}{Lemma}
\newtheorem{remark}{Remark}
\newtheorem{defin}{Definition}
\def \x {\mathbf{x}}
\def \dx {\mathbf{\dot{x}}}
\begin{document}
	\title{Stability analysis of a charged particle subject to two non-stationary currents}
	
	\author{Stefano Mar\`o\thanks{
			Department of Mathematics, Faculty of Science, University of Oviedo,
			Oviedo, Spain. E-mail: 
			\texttt{marostefano@uniovi.es}}
		\and
		Francisco Prieto-Castrillo\thanks{
			Department of Mathematics, Faculty of Science, University of Oviedo,
			Oviedo, Spain. E-mail: 
			\texttt{prietofrancisco@uniovi.es}
		}
	}
	\date{}
	\maketitle
	
	\begin{abstract}
		We study the non relativistic motions of a charged particle in the electromagnetic field generated by two parallel electrically neutral vertical wires carrying time depends currents. Under quantitative conditions on the currents we prove the existence of a vertical strip of stable motions of the particle. The stable strip is contained in the plane of the two wires and the stability is understood in a stronger sense than the isoenergetic stability of Hamiltonian systems. Actually, also variations of the integral given by the linear momentum will be allowed.
	\end{abstract}

\section{Introduction}
The study of charge dynamics in the presence of electromagnetic fields is a cornerstone of classical electrodynamics, with profound implications for both theoretical physics and practical engineering. This paper focuses on a specific scenario: the stability of a charge moving between two infinite conducting wires carrying non-uniform currents. The analysis of this problem not only enhances our understanding of  electromagnetic nonlinear interactions but also has significant applications in various advanced technological domains.
In particular, in magnetic confinement devices such as tokamaks and stellarators, controlling and stabilizing charged particles is crucial for maintaining hot and dense plasma. The study of charge stability under the Lorentz force can provide valuable insights into the design and operation of these devices, minimizing energy losses and maximizing confinement time \cite{wesson2004tokamaks}. This research could offer new perspectives on particle interactions with magnetic fields at the plasma edge, where instabilities are common and can lead to confinement loss \cite{connor2004edge}.\\
Also, turbulence is a major cause of anomalous transport in plasmas, affecting confinement efficiency. Understanding the stability of a charge in a complex magnetic field can help model how particles transport through the plasma. These findings could contribute to the development of more accurate theories and simulations of particle dynamics and the formation of coherent structures within the plasma \cite{diamond2005zonal}.\\
The principles of charged particle stability in magnetic fields also apply to the study of astrophysical plasmas, such as those found in the solar wind, planetary magnetospheres, and in the formation of stars and galaxies \cite{parker1958dynamics}.
To address these challenges, we employ the Lorentz force, which describes the force experienced by a charged particle in an electromagnetic field. We derive critical insights into the factors that influence the stability of a charge moving between two infinite conductors with non-uniform current distributions.

More precisely, we study the motion of a charged particle in the electromagnetic field generated by two parallel electrically neutral vertical wires carrying time depends currents. We deal with non relativistic scenario and employ the Newton-Lorentz equation (see \cite{Griffiths,Jackson,Lorentz}) given by:
\begin{equation}\label{eqLintro1}
	m\ddot{\mathbf{x}}= q(\mathbf{E}(t,\mathbf{x})+\dot{\mathbf{x}}\times \mathbf{B}(t,\mathbf{x})), \quad \mathbf{x}\in\mathbb{R}^3. 
\end{equation}
Here, $\mathbf{E}(t,\mathbf{x})$ and $\mathbf{B}(t,\mathbf{x})$ represent the electric and magnetic fields generated by the currents, being solutions of Maxwell's equations. 
In case of stationary currents, equations \eqref{eqLintro1} can be written as an autonomous Hamiltonian systems and has been extensively studied in \cite{ALP,GP1,GP2}. More precisely, the electric field vanishes and the kinetic energy is preserved. Together with the translational symmetry one can get interesting confinement properties. The simpler case of just one vertical wire enjoys also a rotational symmetry that makes the system integrable. The motion of particles with non-null angular momentum are helicoidal and can escape to infinity only in the vertical direction.        

The introduction of a time dependence in the currents brings the consequence that the energy is no longer preserved and the previous analysis cannot be carried on. Moreover, as far as the authors know,  the motion in time dependent electromagnetic field has not been studied widely. Some results have been obtained through variational techniques in relativistic regimes in \cite{ABT,ABT2}. However, in these works, electromagnetic fields are assumed to be regular. This is not satisfied by our problem since collisions with the wire produce singularities. In regard to isolated singularities, we cite \cite{GT} where periodic solutions are found. The case of an infinite wire has been studied in \cite{hau,king,lei_zhang}. However, unlike our case, a time-dependent charge density and no current are considered there.\\  
The problem of a single wire with time-dependent current has recently been studied in \cite{garzon}, where the existence of a stability zone is demonstrated.
In fact, as a consequences of the rotational and translation symmetries, the motion starting in a suitable region of the phase space are always contained between two cylinders. In this case, time dependence introduces new resonances in the system that must be taken into account.\\
Notably, in the present case of two wires, the rotation symmetry is broken. However, we are still able to find a confinement region. The key geometrical observation is that the plane containing the wires is invariant. Our confinement region is then a planar strip parallel to the wires and contained in this invariant plane. The location of the strip depends on the currents and can be either between the wires or outside this region. In every case, motions starting inside the strip are bounded in the component orthogonal to the wires.
\par 
To provide closed-form expressions, we will focus on the invariant plane. The dynamics are now described through a time dependent system with two degree of freedom. The translational symmetry allows a further reduction to a time dependent Hamiltonian system with one degree of freedom depending on a parameter given by the linear momentum. The reduced equations describe the motions in the orthogonal direction to the wires. Stability is then understood as Lyapunov stability of the orthogonal component. We stress the fact that in our concept of stability we allow also variations of the linear momentum. Hence we get a stronger concept of stability than the classical iso-energetic stability for Hamiltonian systems.
\par 
To advance our analysis towards our final result we start with fixed values of the linear momentum. This allows us to get periodic solutions for non stationary currents bifurcating from the equilibrium present in the stationary case. At this stage we exclude some resonances and it will be important to assume that the time-dependent currents are small perturbations of constant currents. We then prove that the just obtained periodic solutions is of twist type through the third approximation method, introduced by Ortega in \cite{Ortega} and sharpened by Zhang \cite{zhang}. More precisely we are going to use a version in \cite{torres_zhang}. We recall here that periodic solutions of twist type are Lyapunov stable and accumulated by quasiperiodic and subharmonic solutions. This technique has also been used for the case of one wire in \cite{garzon} and in \cite{lei_zhang} for the case of a time dependent charge density. Finally, adapting an argument in \cite{SM} we also prove that stability of twist type for fixed values of the momentum implies stability allowing variations of the linear momentum. 
\par 
The remainder of this paper is structured as follows. In section \ref{sec:statement} we give a precise formulation of our problem and state our main result together with our concept of stability. In section \ref{sec:em} we study the electromagnetic field generated by the two wires, while the equations of motion are deduced in section \ref{sec:equations}. The study of the existence of the confinement zone is developed in section \ref{sec:stab}. Finally we draw some conclusion and propose a discussion o possible applications and future work.         
\section{Statement of the problem and main result}\label{sec:statement}

In a reference frame $Oxyz$ (see Fig.\ref{fig:twowires}), consider a system of two electrically neutral vertical parallel wires of infinite length with constant distance $d$ between them and carrying non stationary currents $\mathcal{I}^{(i)}_\eta(t)$, where for $i=1,2$,
\begin{equation}
	\mathcal{I}^{(i)}_{\eta}(t)=I^{(i)}_{0}+\eta I^{(i)}(t).  
\end{equation}
Here, $I^{(i)}_{0}$ are constants, $I^{(i)}:\mathbb{R}\rightarrow\mathbb{R}$ are $T$-periodic and $C^4$ functions such that
\begin{equation}
	\int_0^TI^{(i)}(t) dt =0
\end{equation}
and $\eta\geq 0$ is a parameter.
Without loss of generality we suppose the first wire located along the $Oz$ axis and the second passing through the point $(0,d,0)$ (see Fig.\ref{fig:twowires}).
\begin{figure}[h!]
	\begin{center}
		\includegraphics[scale=0.4]{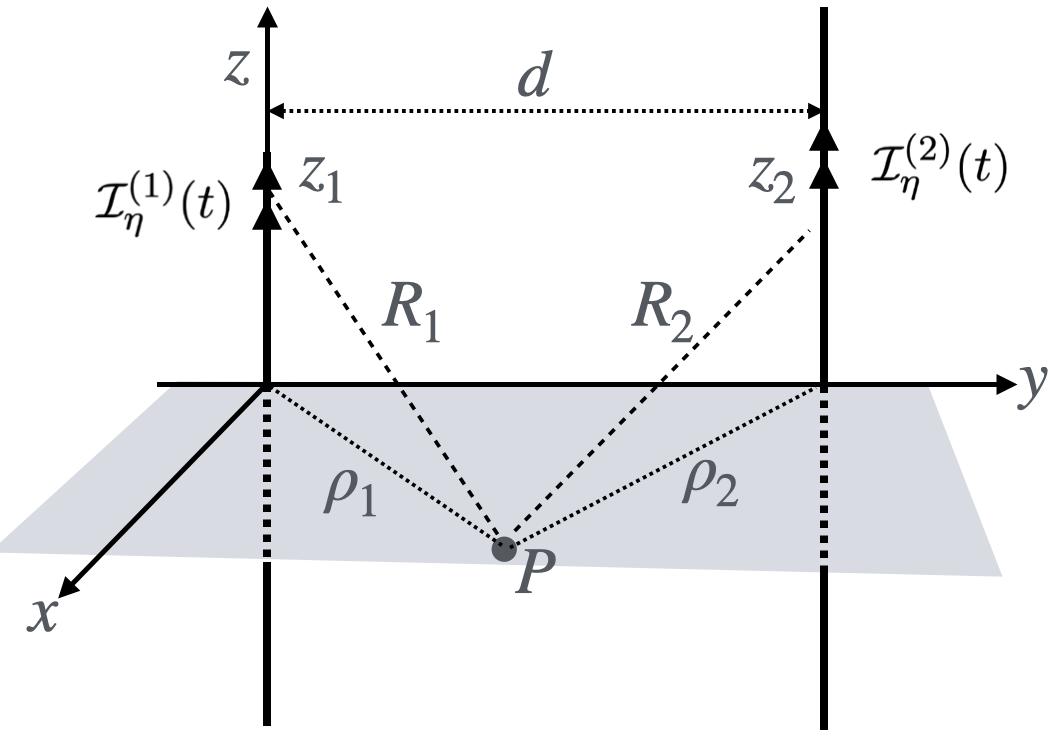}
	\end{center}
	\caption{Schematic of geometric arrangement of two equidistant infinite electrically neutral currents producing a field at point $P$.}
	\label{fig:twowires}
\end{figure}
The system generates an electromagnetic field denoted by $\mathbf{E}_\eta$ and $\mathbf{B}_\eta$ and a moving particle of charge $q$ and mass $m$ suffers the Lorentz force given by:
$$\mathbf{F}_\eta(\mathbf{x},\dx,t)=q(\mathbf{E}_\eta(\mathbf{x},t)+\dx\times\mathbf{B}_\eta(\mathbf{x},t)).$$  
We are interested in the motion of this particle under the Newton-Lorentz equation
\begin{equation}\label{nleq}
	m\ddot{\mathbf{x}}=\mathbf{F}_\eta(\mathbf{x},\dx,t).
\end{equation}
To state our result, we note that by the translation symmetry, the linear momentum $p_z=m\dot{z}+q A_\eta(x,y,t)$ is preserved. See section \ref{sec:em} for the definition of the vector potential $\mathbf{A}_\eta$.  In section \ref{sec:equations} we will see that the plane $Oyz$ containing the wires is invariant and we restrict to motions in this plane. Moreover, in section \ref{sec:stab_aut} we will see that for $\eta=0$ it is possible to find a uniform motion of the particle. Actually, there exists a vertical line $y=\bar{y}$ such that each motion starting on this line with momentum $m\dx = p_z \mathbf{\hat{z}}$ satisfy for every $t\in\mathbb{R}$
\[
y(t)\equiv \bar{y}\qquad\text{and}\qquad z(t)=\frac{p_z}{m}t+z(0).
\] 
We are interested in the stability of motions close to this line for the perturbed case $\eta>0$. More precisely we are interested in the existence of a stable strip of motion, defined as follows. 
\begin{defin}\label{def_stab} 
	For a value of $p_z$, and given $\epsilon>0$  we say that the system of two wires has a $(\epsilon,p_z)$-stable strip around $\bar{y}$, with $\bar{y}\neq 0,d$, if there exists $\delta>0$ such that each solution of \eqref{nleq} with linear momentum $\tilde{p}_z$ and initial condition satisfying 
	\[
	|\dot{y}(0)|+|y(0)-\bar{y}|+|\tilde{p}_z-p_z|<\delta \quad\text{and}\quad (x(0),\dot{x}(0))=(0,0)  
	\] 
	is such that, for every $t\in\mathbb{R}$
	\[
	x(t)\equiv 0, \qquad |\dot{y}(t)|+|y(t)-\bar{y}|<\epsilon \quad\text{and}\quad  \left|m\dot{z}(t)-p_z\right| < C\epsilon 
	\]
	for a constant $C$ depending on $\bar{y}$.
\end{defin} 
\begin{remark}\begin{itemize}
		
		\item After reducing to the invariant plane, the condition for stability means Lyapunov stability in the $y$ component with variation of the initial condition in the whole phase space. In this sense our definition is stronger than the concept of isoenergetic stability where the value of the integral $p_z$ is kept fixed. However, we cannot guarantee stability in the phase space also for the coordinate $z$. Actually, if $p_z\neq 0$ we get uniform motion with positive velocity. 
		
		\item Note that by translational symmetry the role of $z(0)$ is irrelevant. 
		\item We say that this scenario is stable since each motion in the plane $Oyz$ starting close to the line $y=\bar{y}$ with small velocity $\dot{y}$ takes place between two lines of the form $y=C_1$ and $y=C_2$ for two constants $C_1,C_2$ depending on the initial condition, while the motion in the $z$ component is close to a uniform motion with velocity $p_z/m$.
	\end{itemize}
\end{remark}		
To state our main result we introduce the ratio
\[
k=\frac{I^{(2)}_{0}}{I^{(1)}_{0}}
\]
that, since we will suppose $I^{(1)}_{0},I^{(2)}_{0}\neq 0$ is a non-null real number. Using it we denote
\[
I^{(1)}_{0}=I_0, \qquad I^{(2)}_{0}=kI_0.
\]

\begin{theorem}\label{main}
	Suppose that
	\begin{align}\label{stab}
		0<-\frac{q\mu_0}{2\pi m^2}p_zI_0\frac{(1+k)^3}{kd^2}<\left(\frac{\pi}{2T}\right)^2, \quad  
		\frac{9q\mu_0}{4\pi}I_0(1+k)\neq \frac{p_z}{k}(-k^2+11k-1)
	\end{align}
	and
	\begin{equation}\label{nonres}
		-\frac{q\mu_0}{2\pi m^2}p_zI_0\frac{(1+k)^3}{kd^2}\neq \left(\frac{2\pi}{nT}\right)^2 \qquad\text{for all } n\in\mathbb{N}.
	\end{equation}
	Then, for every $\epsilon>0$ there exists $\eta_2>0$ such that the system of two wires for $\eta<\eta_2$ has a $(\epsilon,p_z)$-stable strip around
	\[
	\bar{y}=\frac{d}{1+k}.
	\]    
\end{theorem}

Note that, depending on the value of the parameter $k$, the strip can be located either between the two wires or outside this region. Moreover, as a byproduct of the proof we will also get the following
\begin{remark}
	\begin{itemize}
		\item If $\eta=0$ and $p_zI_0\frac{1+k}{k}<0$ then the $y$-components of all the solutions with initial condition in the stable strip are periodic.
		\item Under the conditions of Theorem \ref{main} and for $\eta<\eta_2$ there exist, in the stable strip, infinitely many solutions whose $y$-component is subharmonic and an uncountable collection of solutions whose $y$-component is quasiperiodic. To avoid some technicality, we refer to \cite{SM,OrtegaLisboa} for the formal definition of these solutions.   
	\end{itemize}	
\end{remark}

\section{The electromagnetic field}\label{sec:em}
In this section, following \cite{garzon}, we build the electromagnetic field generated by our system. The electric field $\mathbf{E}_\eta$ and the magnetic field $\mathbf{B}_\eta$ must be solutions of the Maxwell equations with charge density $\rho_\eta=0$ and $\mathbf{J}_\eta$ given by:
\begin{equation}
	\mathbf{J}_\eta(\mathbf{x},t) = \mathcal{I}^{(1)}_{\eta}(t)\delta(x)\delta(y)\mathbf{\hat{z}} +\mathcal{I}^{(2)}_{\eta}(t)\delta(x)\delta(y-d)\mathbf{\hat{z}}.
\end{equation}

This is accomplished by introducing the respective electric and magnetic potentials:
\begin{eqnarray*}
	\mathbf{E}_\eta=-\nabla\phi_\eta-\frac{\partial \mathbf{A}_\eta}{\partial t}, \qquad
	\mathbf{B}_\eta=\nabla\times\mathbf{A}_\eta,
\end{eqnarray*}
and the Lorenz Gauge. We can choose the scalar potential $\phi_\eta=0$. The vector potential is given by the Jefimenko formula:
\begin{equation}\label{Jefi}
	\mathbf{A}_\eta(\mathbf{x},t)=\frac{\mu_0}{2\pi}\mathbf{\hat{z}}\int_{0}^{\infty}\frac{\mathcal{I}^{(1)}_\eta(t_r^{(1)})}{R_{1}}dz+\frac{\mu_0}{2\pi}\mathbf{\hat{z}}\int_{0}^{\infty}\frac{\mathcal{I}^{(2)}_\eta(t_r^{(2)})}{R_{2}}dz
\end{equation}
where $\mu_0$ is the magnetic permittivity in the vacuum and $t^{(i)}_{r}=t-R_{i}/c$ is the retarded time. Here we denote $R_i=\sqrt{z^2+\rho_i^2}$ with  $\rho^2_{1}=x^2+y^2$, and $\rho^2_{2}=x^2+(y-d)^2$, see Fig.\ref{fig:twowires}.\\
Note that the case $\mathbf{A}_0$ produces divergent integrals. However it is possible to see that Jefimenko's renders a solution of the corresponding wave equation in the distributional sense (see \cite{garzon}).\\
The total vector potential is independent on $z$ and given by $\mathbf{A}_\eta(\mathbf{x},t)=A_\eta(x,y,t)\mathbf{\hat{z}}$, where 
\[
A_\eta(x,y,t)=A_0(x,y)+\eta A(x,y,t).
\] 
\[
A_0(x,y) =-\frac{\mu_0}{2\pi}\sum_{i=1}^2I_0^{(i)}\ln(\rho_i) , \qquad  A(x,y,t)=-\frac{\mu_0}{2\pi}\sum_{i=1}^2 a_i(x,y,t)
\]
where 
\[
a_i(x,y,t)=\int_0^\infty  \frac{I^{(i)}\left(t_r^{(i)}\right)}{R_i} dz.
\]
Here $a_1$ is $C^4(\{\mathbb{R}^2\setminus (0,0)\}\times\mathbb{R})$, $a_2$ is $C^4(\{\mathbb{R}^2\setminus (0,d)\}\times\mathbb{R})$ and both are $T$-periodic in $t$.

\noindent The electric and magnetic fields are then given by 

\begin{eqnarray}
	\nonumber\mathbf{E}_\eta(\mathbf{x},t)= -\frac{\partial \mathbf{A}_\eta(\mathbf{x},t)}{\partial t}=
	-\frac{\partial A_\eta(x,y,t)}{\partial t}\mathbf{\hat{z}} =-\eta\frac{\partial A(x,y,t)}{\partial t}\mathbf{\hat{z}}= \eta E(x,y,t)\mathbf{\hat{z}},\\
	\mathbf{B}_\eta(\mathbf{x},t)=\nabla\times\mathbf{A}_\eta(\mathbf{x},t)=
	\nabla\times A_\eta(x,y,t)\mathbf{\hat{z}} =-\frac{\partial A_\eta}{\partial x}(x,y,t)\mathbf{\hat{y}} +\frac{\partial A_\eta}{\partial y}(x,y,t)\mathbf{\hat{x}}. 
	\label{eq:EB}  
\end{eqnarray}

When $I^{(i)}(t)$ are linear combinations of sines an cosines, these fileds have an explicit expression in terms of Bessel and Neumann functions (see \cite{AG}).

\section{Equations of motion}\label{sec:equations}

The Newton-Lorentz equation \eqref{nleq} is given by the Euler-Lagrange equations for the Lagrangian
\begin{equation}
	L(\mathbf{x},\mathbf{\dot{x}},t)=\frac{1}{2}m|\mathbf{\dot x}|^2-q\phi_\eta+q\mathbf{A}_\eta\cdot\dx
\end{equation}
which results in
\begin{equation}
	L(\mathbf{x},\mathbf{\dot{x}},t)=\frac{1}{2}m|\mathbf{\dot x}|^2+q \dot{z}A_\eta(x,y,t). 
\end{equation} 
Note that the variables $z$ is cyclic, so that the conjugated momenta $p_z$ is a first integral. This will be evident in the following Hamiltonian formulation.
To this aim, we introduce the generalized momenta by
\[
\mathbf{p}=\frac{\partial L}{\partial \mathbf{\dot{x}}} = m\mathbf{\dot{x}}+q\mathbf{A}_\eta.
\]
The Hamiltonian is given by
\begin{align*}
	H(\mathbf{x},\mathbf{p},t)&=\frac{1}{2m}\left|\mathbf{p}-q\mathbf{A}_\eta(\x,t)\right|^2+q\phi_\eta(\x,t)\\
	&=\frac{1}{2m}\left\{p_x^2+p_y^2+[p_z-qA_\eta(x,y,t)]^2\right\}
\end{align*}
and the Hamilton equations are
\begin{align*}
	\dot{x}&=\frac{1}{m}p_x, & \dot{p}_x &=\frac{q}{m}[p_z-qA_\eta(x,y,t)]\frac{\partial A_\eta}{\partial x}(x,y,t),\\
	\dot{y}&=\frac{1}{m}p_y, & \dot{p}_y&= \frac{q}{m}[p_z-qA_\eta(x,y,t)]\frac{\partial A_\eta}{\partial y}(x,y,t),\\ 	
	\dot{z}&=\frac{1}{m}[p_z-qA_\eta(x,y,t)], & \dot{p}_z&= 0, 	 
\end{align*}
from which we get that $p_z$ is constant and
\begin{align}
	\ddot{x} &=\frac{q}{m^2}[p_z-qA_\eta(x,y,t)]\frac{\partial A_\eta}{\partial x}(x,y,t)\label{xeq}, \\
	\ddot{y} &=\frac{q}{m^2}[p_z-qA_\eta(x,y,t)]\frac{\partial A_\eta}{\partial y}(x,y,t) \label{yeq}, \\
	\dot{z}&=\frac{1}{m}[p_z-qA_\eta(x,y,t)]\label{zeq}. 	 
\end{align}
Hence the dynamics is reduced to equations \eqref{xeq}-\eqref{yeq} depending on the parameter $p_z$. Note that this is a time-dependent Hamiltonian system with two degrees of freedom, making it complex to analyze. We can reduce the degrees of freedom noting that the plane $Oyz$ is invariant and considering the motions that take place in this plane. This is the aim of the following Lemma  

\begin{lemma}\label{lemmaz0}
	Each solution of $\eqref{xeq},\eqref{yeq},\eqref{zeq}$ satisfying $x(t_0)=\dot{x}(t_0)=0$ for some $t_0\in\mathbb{R}$ is such that 
	\[
	x(t)\equiv 0 \qquad\text{for all } t\in\mathbb{R}. 
	\] 
\end{lemma}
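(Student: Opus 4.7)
The plan is to exploit the reflection symmetry $x\mapsto -x$ of the wire configuration: both wires lie in the plane $x=0$ and the currents depend only on $t$, so the physical setup is invariant under this reflection. This should force the vector potential $A_\eta(x,y,t)$ to be even in $x$, and hence $\partial A_\eta/\partial x$ to vanish on the invariant plane $x=0$. Once this is established, equation \eqref{xeq} becomes trivially satisfied by $x(t)\equiv 0$, and uniqueness of ODE solutions finishes the argument.

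First, I would verify evenness in $x$ directly from the explicit formulas given in Section \ref{sec:em}. The stationary part $A_0(x,y) = -\frac{\mu_0}{2\pi}\sum_i I_0^{(i)}\ln\rho_i$ depends on $x$ only through $\rho_i^2 = x^2 + (\cdot)^2$, so it is manifestly even. For the time-dependent part, the integrand of $a_i(x,y,t) = \int_0^\infty R_i^{-1}\,I^{(i)}(t - R_i/c)\,dz$ depends on $x$ only through $R_i = \sqrt{z^2+\rho_i^2}$, which is again even in $x$; hence $a_i(-x,y,t) = a_i(x,y,t)$, and therefore $A_\eta(-x,y,t) = A_\eta(x,y,t)$.

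Differentiating this identity with respect to $x$ and setting $x=0$ yields
\[
\frac{\partial A_\eta}{\partial x}(0,y,t) = 0
\]
for every $t$ and every $y\notin\{0,d\}$, using the $C^4$ regularity of $a_i$ away from the wires. Substituting this into \eqref{xeq} shows that the right-hand side of \eqref{xeq} vanishes whenever $x=0$, so the pair $(x(t),y(t))$ with $x(t)\equiv 0$ and $y(t)$ solving the reduced equation \eqref{yeq} (with $x$ set to zero) is a bona fide solution of the full system, and it satisfies the initial conditions $x(t_0)=\dot{x}(t_0)=0$.

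The remaining step is uniqueness. Since $A_\eta$ is $C^4$ on a neighborhood of every point $(0,y_0)$ with $y_0\neq 0,d$, the vector field driving \eqref{xeq}-\eqref{yeq} is locally Lipschitz near any admissible initial datum, and the Picard-Lindelöf theorem yields local uniqueness; a standard continuation argument then propagates $x(t)\equiv 0$ to the whole real line (noting that a trajectory with $x\equiv 0$ stays on the invariant $y$-axis and thus cannot reach the excluded wire singularities unless $y(t)$ itself does). The only mild subtlety is ensuring one does not invoke the ODE theory at a singular point; this is handled automatically because the candidate solution never leaves the plane $x=0$ and $y(t)$ remains bounded away from $\{0,d\}$ on any interval on which it is defined.
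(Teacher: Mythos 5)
Your proposal is correct and follows essentially the same route as the paper: both hinge on establishing $\frac{\partial A_\eta}{\partial x}(0,y,t)=0$ for $y\neq 0,d$ and then concluding by uniqueness of solutions, since $x(t)\equiv 0$ solves \eqref{xeq}. The only cosmetic difference is that you obtain the vanishing of the derivative from the evenness of $A_\eta$ in $x$, whereas the paper differentiates under the integral sign and observes the explicit factor of $x$ in the resulting integrand; these are the same fact seen two ways.
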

\begin{proof}
	We claim that for all $y\neq{0,d}$ and $t\in\mathbb{R}$,
	\begin{equation}\label{condinv}
		\frac{\partial A_\eta}{\partial x}(0,y,t) =0.
	\end{equation}
	From a direct computation we get that 
	\[
	\frac{\partial A_{0}}{\partial x}(0,y)=0.
	\]
	Concerning the term $A(x,y,t)$, we differentiate $a_i(x,y,t)$, $i=1,2$, under the integral and get
	\begin{align*}
		\frac{\partial a_i}{\partial x}(x,y,t)&=\int_0^\infty  \frac{\partial}{\partial x}\frac{I^{(i)}\left(t_r^{(i)}\right)}{R_i} dz\\ & =
		-\int_0^\infty x\left[ \frac{\dot{I}^{(i)}dt\left(t_r^{(i)}\right)}{cR_i^2}+ \frac{I^{(i)}\left(t_r^{(i)}\right)}{2R_i^{3/2}}\right] dz.
	\end{align*}
	Evaluating in $x=0$ we get the claim proved. The thesis follows from the uniqueness of the solution and the fact that $x(t)\equiv 0$ is a solution of \eqref{xeq}.  
\end{proof}

\begin{remark}
	Note that condition \eqref{condinv} implies that for $x=0$ the magnetic filed $\mathbf{B}_\eta(\mathbf{x},t)$ give by \eqref{eq:EB} is orthogonal to the plane $Oyz$. Hence, the Lorentz force $\mathbf{F}(\mathbf{x},\dx,t)$ gives no contribution in the direction $\mathbf{\hat{x}}$ if also $\dot{x}=0$.   
\end{remark}

From Lemma \ref{lemmaz0} we can reduce equations \eqref{yeq} and \eqref{zeq} fixing $x=0$, getting
\begin{align}
	\ddot{y} &=\frac{q}{m^2}[p_z-qA_\eta(y,t)]\frac{\partial A_\eta}{\partial y}(y,t)\label{yeq1} \\
	\dot{z}&=\frac{1}{m}[p_z-qA_\eta(y,t)]\label{zeq1} 
\end{align}
where we have denoted $A_\eta(y,t)=A_\eta(0,y,t)$.

We can solve the reduced system once we have the solution of the equation \eqref{yeq1} that can be rewritten as

\begin{equation}\label{yeq12}
	\ddot y +V'(y)+\eta F(y,t;\eta)=0
\end{equation}
with:
\begin{align*}
	V(y)&=\frac{1}{2m^2}[p_z-qA_0(y)]^2 \\ 
	F(y,t;\eta) &= \frac{q}{m^2}\left\{qA_0'(y)A(y,t)-[p_y-qA_0(y)]A'(y,t)+\eta q A(y,t)A'(y,t)\right\}
\end{align*}

\section{Existence of the stable strip of motion}\label{sec:stab}

\subsection{The autonomous case}\label{sec:stab_aut}

Suppose that $\eta=0$ so that the problem is autonomous and the motion of the $y$-component is given by
\begin{equation}\label{yeq120}
	\ddot y +V'(y)=0
\end{equation}
that is an oscillator with potential $V(y)$. We have that, 
\[
V'(y)=-\frac{q}{m^2}[p_z-qA_0(y)]A'_0(y) 
\]
and we can have stationary solutions $y(t)=\bar{y}$ in two cases. The first case is when $m\bar{\dot{z}}=p_z-qA_0(\bar{y})=0$ and corresponds to the case of null velocity (and hence null forces). This is a degenerate case since all the points of the $Oyz$ plane are equilibria. We will be interested in the case when $A'_0(\bar{y})=0$, that, if $k\neq -1$ is the weighted mean of the distance between the wires
\[
\bar{y}=d\frac{I_0^{(1)}}{I_0^{(1)}+I_0^{(2)}}=\frac{d}{1+k}
\]
where the magnetic field $\mathbf{B}_0$ vanishes.  Since $A_0(y)$ is a potential, it is defined up to an additive constant so that we can suppose that $A_0(\bar{y})=0$. Hence, to exclude the previous degenerate case we suppose $m\bar{\dot{z}}=p_z\neq 0$. Since
\[
V''(\bar{y})=-\frac{q}{m^2}(p_z-qA_0(\bar{y}))A_0''(\bar{y})=-\frac{q}{m^2}p_zA_0''(\bar{y}),
\] 
if $p_zA_0''(\bar{y})>0$ then the stationary solution is a maximum of the effective potential and it is unstable. On the other hand, if
\[
qp_zA_0''(\bar{y})=
q\frac{\mu_0}{2\pi}p_zI_0\frac{(1+k)^3}{kd^2}<0
\] 
then the stationary solution is a minimum of the effective potential and it is stable and surrounded by a continuum of periodic solutions $y_\tau(t)$ with periods 
\[
\tau\to \frac{2\pi}{\sqrt{V''(\bar{y})}}.
\] 
\begin{remark}
	Note that the $y$-component of the Lorentz force is $A'_0(y)\dot{z}$. The conditions $A'_0(\bar{y})=0$ and $q\bar{\dot{z}}A_0''(\bar{y})<0$ for stability imply that $y$-component of the Lorentz force close to the line $y=\bar{y}$ is directed towards the same line.
\end{remark}

\begin{remark}
	Note that if $k=-1$ and $\dot{z}\neq 0$ then $V'(y)$ has constant sign. This implies that equation \eqref{yeq120} has no constant solutions and all the solutions are unbounded.  
\end{remark}
We have that
\begin{theorem}
	Suppose that
	\begin{equation}\label{condpz}
		qp_zI_0\frac{1+k}{k}<0.
	\end{equation}
	Then the system of two wires for $\eta=0$ has a $p_z$-stable strip around
	\[
	\bar{y}=\frac{d}{1+k}.
	\]
	
\end{theorem}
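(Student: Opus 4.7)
The plan is to exploit the conservative structure of the reduced one-dimensional equation, apply the Lagrange--Dirichlet principle at $\bar{y}$, and then promote the resulting Lyapunov stability to the strip-stability of Definition \ref{def_stab} with uniform control over perturbations of the first integral $p_z$.

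First, Lemma \ref{lemmaz0} immediately gives $x(t)\equiv 0$, reducing the dynamics to \eqref{yeq1}--\eqref{zeq1}. Since Definition \ref{def_stab} allows the linear momentum $\tilde{p}_z$ to vary near $p_z$, I would make this dependence explicit by setting
\[
V(y;\tilde{p}_z)=\frac{1}{2m^2}\bigl(\tilde{p}_z-qA_0(y)\bigr)^2,
\]
so that \eqref{yeq1} becomes $\ddot y+\partial_yV(y;\tilde{p}_z)=0$, with conserved energy $E(y,\dot y;\tilde{p}_z)=\tfrac{1}{2}\dot y^2+V(y;\tilde{p}_z)$.

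Next, I would verify the uniform minimum structure. The excerpt already shows $\partial_yV(\bar{y};p_z)=0$ and $\partial_{yy}V(\bar{y};p_z)=-qp_zA_0''(\bar{y})/m^2$, and hypothesis \eqref{condpz} makes this quantity strictly positive. Since $\tilde{p}_z\mapsto\partial_{yy}V(\bar{y};\tilde{p}_z)$ is continuous, positivity persists on a neighborhood of $p_z$. I would then fix $r>0$ small enough that the interval $[\bar{y}-r,\bar{y}+r]$ stays away from the wires $y=0,d$ and contains $\bar{y}$ as the unique critical point of $V(\cdot;\tilde{p}_z)$ for every admissible $\tilde{p}_z$, and define the uniform gap
\[
\alpha=\inf_{\tilde{p}_z}\min_{\sigma=\pm}\bigl\{V(\bar{y}+\sigma r;\tilde{p}_z)-V(\bar{y};\tilde{p}_z)\bigr\}>0.
\]
Any orbit whose initial energy satisfies $E-V(\bar{y};\tilde{p}_z)<\alpha$ is then trapped for all $t\in\mathbb{R}$ in $(\bar{y}-r,\bar{y}+r)$ with $|\dot y|<\sqrt{2\alpha}$, and is periodic because it lies on a closed level curve — this also establishes the autonomous part of the corollary.

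To fit Definition \ref{def_stab}, given $\epsilon>0$ I would first shrink $r$ and $\alpha$ so that $r+\sqrt{2\alpha}<\epsilon$, then pick $\delta>0$ so that $|y(0)-\bar{y}|,|\dot y(0)|,|\tilde{p}_z-p_z|<\delta$ forces $E(y(0),\dot y(0);\tilde{p}_z)-V(\bar{y};\tilde{p}_z)<\alpha$. The $z$-estimate follows from \eqref{zeq1}, the normalization $A_0(\bar{y})=0$, and a Lipschitz bound $L$ for $A_0$ on $[\bar{y}-r,\bar{y}+r]$:
\[
|m\dot z(t)-p_z|\le|\tilde{p}_z-p_z|+q|A_0(y(t))-A_0(\bar{y})|\le\delta+qL\epsilon\le C\epsilon.
\]
The only delicate point I anticipate is ensuring that the smallness constants $r,\alpha,\delta$ can be chosen uniformly in $\tilde{p}_z$ on a full neighborhood of $p_z$; this follows from continuity of the data and compactness of that neighborhood, but it is precisely where the stronger notion of Definition \ref{def_stab} demands something beyond the standard iso-energetic Lagrange--Dirichlet argument.
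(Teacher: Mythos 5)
Your proposal is correct and takes essentially the same route as the paper: both arguments rest on the observation that $\bar{y}$ is a nondegenerate minimum of the effective potential whose location is independent of the momentum parameter, obtain trapping from conservation of the reduced energy (the paper via sublevel sets of $H_{\tilde{p}_z}$, you via the equivalent potential-well gap $\alpha$), and get uniformity in $\tilde{p}_z$ by continuity. The $z$-estimate is also the same substitution into the reduced equation for $\dot{z}$; the paper uses a second-order Taylor expansion of $A_0$ at $\bar{y}$ where you use a Lipschitz bound, and either yields the required $C\epsilon$ control.
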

\begin{proof}
	We have already noticed that if $(x(0),\dot{x}(0))=(0,0)$ then $x(t)\equiv 0$. Moreover, the phase space analysis just performed for \eqref{yeq120} and fixed $p_z$ satisfying \eqref{condpz} implies that $(\bar{y},0)$ is a stable center in the phase-plane $(y,p_y)$. To show stability in the sense of Definition \ref{def_stab} we need to consider variations also of the parameter $p_z$. The Hamiltonian of \eqref{yeq120} is
	\begin{equation}\label{hy120}
		H_{p_z}(y,p_y)=\frac{1}{2m}[p_y+(p_z-qA_0(y))^2]
	\end{equation}
	where the constant momentum $p_z$ is seen as a parameter. It is constant along the solutions and has a minimum independent on $p_z$ at $(\bar{y},0)$. Given $\epsilon>0$ and fixing $p_z$, we can find an invariant region containing $(\bar{y},0)$ and of radius smaller than $\epsilon$ of the form $H_{p_z}(y,p_y)\leq \rho(p_z,\epsilon)$. By continuity w.r.t. $p_z$, there exists $\delta<\epsilon$ such that if $|\tilde{p}_z-p_z|<\delta$ we can do the same for the Hamiltonian $H_{\tilde{p}_z}(p_y,y)$. This implies that the region
	\[
	\mathcal{U}=\{(y,p_y,z,\tilde{p}_z)\: :\: H_{\tilde{p}_z}(p_y,y)\leq \rho(\tilde{p}_z,\epsilon),|\tilde{p}_z-p_z|<\delta \}\subset \mathbb{R}^4    
	\]
	is invariant under the flow of the Hamiltonian \eqref{hy120} seen in $\mathbb{R}^4$. This implies the stability of the $y$ component as required in Definition \ref{def_stab}.
	Let us now consider a solution $(y(t),z(t))$ with initial condition in $\mathcal{U}$. If $y(t)=\bar{y}$, substituting in \eqref{zeq1} we get 
	\[
	z(t)=\frac{\tilde{p}_z}{m}t+z(0).
	\] 
	In the other cases, $y(t)$ is periodic with period $\tau$. We denote it $y_\tau(t)$. Substituting in \eqref{zeq1} we get 
	\begin{equation}\label{stab_z0}
		m\dot{z}(t)=\tilde{p}_z-qA_0(y_\tau(t))
	\end{equation}
	By the stability in the $y$ component, we write $y_\tau(t)=\bar{y}+\xi_\tau(t)$ where $\xi_\tau(t)$ is $\tau$-periodic and
	\[
	|\dot{\xi}_\tau(t)|+|\xi_\tau(t)|<\epsilon\qquad \text{for every }t\in\mathbb{R}
	\]
	and expand $A(y_\tau(t))$ as
	\begin{equation}\label{expand}
		A_0(y_\tau(t))=A_0(\bar{y})+A_0'(\bar{y})\xi_\tau(t)+\frac{1}{2}A_0''(\bar{y})\xi_\tau^2(t)+R(t)=\frac{1}{2}A_0''(\bar{y})\xi_\tau^2(t)+R(t)
	\end{equation}
	where we have used that in our case $A_0(\bar{y})=A_0'(\bar{y})=0$ and the remainder satisfies
	\[
	|R(t)| \leq C|\xi_\tau(t)|^3 \quad\text{for some } C\in\mathbb{R}.
	\] 
	Inserting in \eqref{stab_z0} we get
	\[
	|m\dot{z}(t)-p_z|=|\tilde{p}_z-p_z|+|qA_0(y_\tau(t))|<\epsilon+C\epsilon^2
	\] 
	From this we get the required estimate on $\dot{z}(t)$ and the thesis.
	
\end{proof}
The following picture can be deduced:
\begin{remark} 	
	If $\eta=0$ and $(x(0),\dot{x}(0))=(0,0)$ the motion takes place in the plane $Oyz$. Fix $p_z$ satisfying \eqref{condpz} and initial conditions $(y(0),\dot{y}(0),\tilde{p}_z)$ in a $3$ dimensional neighborhood of $(\bar{y},0,p_z)$ and $z(0)\in\mathbb{R}$.  
	If $(y(0),\dot{y}(0))=(\bar{y},0)$ then the motion takes place on the straight line  $y=\bar{y}$ with constant velocity $\tilde{p}_z$. If $(y(0),\dot{y}(0))\neq (\bar{y},0)$ then the $y$-component oscillates with period $\tau$ (depending continuously on the initial conditions) around $y=\bar{y}$ and the linear increase of $z$ suffers a $\tau$-periodically perturbation.       
\end{remark}

We are interested in the behavior of this scenario under perturbations as we elaborate below. 
\subsection{The perturbed case}\label{sec:stab_pert}

Note that, if $\eta\neq 0$, then equation \eqref{yeq12} is $T$-periodic in $t$. Hence, the only periodic solutions can have period of the form $nT$ with $n\in\mathbb{N}$. 

We can prove that there exists a periodic solution bifurcating from the constant solution
$\bar{y}$ found before in the unperturbed case.
To prove it, we need to assume the period $nT$ to be away from resonances. These correspond to the period of the solutions of the linearization of \eqref{yeq12} for $\eta=0$ at $\bar{y}$
\[
\ddot y +V''(\bar{y})y=0, 
\]
namely
\[
\frac{2\pi}{\sqrt{V''(\bar{y})}}
\]
We recall that we have supposed
\[
V''(\bar{y})=-\frac{q\mu_0}{2\pi m^2}p_zI_0\frac{(1+k)^3}{kd^2}>0.
\]
From the general theory of ordinary differential equations \cite{Cod} we have 
\begin{theorem}\label{teoper}
	Suppose that
	\[
	\frac{2\pi}{\sqrt{V''(\bar{y})}}\neq nT \qquad\text{for all } n\in\mathbb{N}.
	\]
	Then there exists $\eta_0>0$ such that for all $|\eta|<\eta_0$ there exists a $T$-periodic solution $y_\eta(t)$ of \eqref{yeq12}. The family of solutions $y_\eta(t)$ satisfies
	\[
	\lim_{\eta\to 0} y_\eta(t) =\bar{y} \qquad\text{uniformly in }t.
	\] 	
\end{theorem}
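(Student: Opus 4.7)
The plan is to apply the implicit function theorem to a fixed-point formulation of the $T$-periodic problem. First I would center the equation at the equilibrium by setting $u = y - \bar{y}$, so that \eqref{yeq12} becomes
$$\ddot u + \omega^2 u + h(u) + \eta F(\bar{y}+u, t; \eta) = 0,$$
where $\omega^2 := V''(\bar{y}) > 0$ and $h(u) := V'(\bar{y}+u) - \omega^2 u$ vanishes to second order at the origin. At $\eta = 0$ the constant $u \equiv 0$ is an equilibrium, so the goal becomes producing, for $\eta$ small, a nearby $T$-periodic solution that continues this trivial one.

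Next I would pass to the Poincaré map $P_\eta(u_0, v_0) := (u(T), \dot u(T))$, where $u$ is the solution with initial data $u(0) = u_0$, $\dot u(0) = v_0$. This map is $C^4$ in a neighborhood of the origin by classical continuous dependence of ODE solutions on initial data and parameters, using the $C^4$ regularity in $y$ and $t$ of the right-hand side. $T$-periodic solutions of \eqref{yeq12} correspond exactly to fixed points of $P_\eta$, so I would define $G(u_0, v_0, \eta) := P_\eta(u_0, v_0) - (u_0, v_0)$; then $G(0, 0, 0) = 0$ and the partial derivative $\partial_{(u_0, v_0)} G(0, 0, 0) = DP_0(0, 0) - I$. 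Since $DP_0(0, 0)$ is the time-$T$ flow of the linearized equation $\ddot u + \omega^2 u = 0$, its matrix is a rotation of angle $\omega T$ in suitable symplectic coordinates, with eigenvalues $e^{\pm i \omega T}$; the difference $DP_0(0,0) - I$ is invertible whenever $\omega T \notin 2\pi \mathbb{Z}$, which is the non-resonance condition encoded in the hypothesis $2\pi/\omega \neq nT$.

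The implicit function theorem then produces a unique $C^4$ branch $\eta \mapsto (u_0(\eta), v_0(\eta))$ of fixed points of $P_\eta$ with $(u_0(0), v_0(0)) = (0, 0)$, defined for $|\eta| < \eta_0$. The corresponding solutions $y_\eta(t) = \bar{y} + u(t; u_0(\eta), v_0(\eta), \eta)$ are the required $T$-periodic orbits, and the uniform limit $y_\eta(t) \to \bar{y}$ as $\eta \to 0$ follows from continuity of the flow in $(u_0, v_0, \eta)$ together with $(u_0(\eta), v_0(\eta)) \to (0, 0)$. The only delicate point is the non-resonance check at the linearized Poincaré map; once that is in place, the argument is a textbook IFT in $\mathbb{R}^2 \times \mathbb{R}$ in the spirit of Coddington--Levinson, so no further technical ingredient is needed.
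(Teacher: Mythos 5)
Your approach is exactly the one the paper intends: the paper offers no written proof of Theorem \ref{teoper}, only the citation to Coddington--Levinson, and the Poincar\'e-map-plus-implicit-function-theorem argument you give is the standard content behind that citation. The setup (centering at $\bar{y}$, $C^4$ dependence of the flow, fixed points of $P_\eta$, invertibility of $DP_0(0,0)-I$, and the uniform convergence from continuity of the flow) is all correct.

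One point deserves a correction, and it is precisely the step you single out as delicate. The matrix $DP_0(0,0)$ has eigenvalues $e^{\pm i\omega T}$ with $\omega=\sqrt{V''(\bar{y})}$, so $DP_0(0,0)-I$ is invertible iff $\omega T\neq 2\pi m$ for all $m\in\mathbb{N}$, i.e.\ iff $2\pi/\omega\neq T/m$. The hypothesis as stated excludes instead $2\pi/\omega = nT$, i.e.\ $\omega T=2\pi/n$. These two families of excluded frequencies coincide only at $n=m=1$: for example $\omega T=4\pi$ satisfies the stated hypothesis ($2\pi/\omega=T/2$ is not of the form $nT$) yet makes $DP_0(0,0)-I$ singular, so your assertion that the invertibility condition ``is the non-resonance condition encoded in the hypothesis'' is not literally true. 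This slip is arguably inherited from the paper's own phrasing of the resonance condition, and it is harmless in the context where the theorem is actually used, since the later twist hypothesis $0<V''(\bar{y})<(\theta^*/T)^2$ with $\theta^*=\arccos(-1/4)<\pi$ forces $\omega T<2\pi$ and hence rules out all the genuine resonances $\omega T\in 2\pi\mathbb{N}$ automatically. Still, to make your proof self-contained you should either replace the hypothesis by $\sqrt{V''(\bar{y})}\,T\neq 2\pi n$ for all $n\in\mathbb{N}$, or note explicitly that you are only using the case in which $\omega T<2\pi$.
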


We are now interested in the Lyapunov stability of the periodic solutions $y_\eta(t)$ we just found. More precisely we are going to find conditions under which these solutions are of twist type. The definition of periodic solution of twist type is given in \cite{SM} and we do not recall it to avoid technicalities. We just highlight that a periodic solution of twist type is Lyapunov stable and accumulated by subharmonic and quasiperiodic solutions. \\
To prove that the periodic solutions $y_\eta(t)$ are of twist type we follow the approach in \cite{Ortega} and subsequent generalizations \cite{zhang, torres_zhang}. 
This depends on the third approximation of \eqref{yeq12} around $y_\eta(t)$. To compute it, let us perform the change of variable $Y=y-y_\eta(t)$ and expand up to third order around $Y=0$, getting
\[
\ddot{Y}+\alpha_\eta(t)Y + \beta_\eta(t)Y^2 +\gamma_\eta(t)Y^3+\dots =0
\]  
where
\begin{align*}
	\alpha_\eta(t)&= V''(y_\eta(t)) + \eta\frac{\partial}{\partial y} F(y_\eta(t),t;\eta)      \\
	2\beta_\eta(t)&= V'''(y_\eta(t)) +  \eta\frac{\partial^2}{\partial y^2} F(y_\eta(t),t;\eta)    \\
	6\gamma_\eta(t)&= V^{(iv)}(y_\eta(t))+  \eta\frac{\partial^3}{\partial y^3} F(y_\eta(t),t;\eta).    
\end{align*}    
Since $y_\eta(t)\rightarrow \bar{y}$ uniformly in $t$ as $\eta\to 0$, we can write  
\begin{align*}
	\alpha_\eta(t)&=\alpha_0 + \xi_\alpha(t,\eta)     \\
	\beta_\eta(t)&=\beta_0  +\xi_\beta(t,\eta) \\
	\gamma_\eta(t)&=\gamma_0 +\xi_\gamma(t,\eta)
\end{align*}
where $\xi_\alpha,\xi_\beta,\xi_\gamma$ are $T$-periodic in $t$ and tend to $0$ uniformly as $\eta\to 0$. 
From a direct computation we have that
\begin{align*}
	\alpha_0 =V''(\bar{y}) & =-\frac{q}{m^2}p_zA_0''(\bar{y})= -\frac{q\mu_0}{2\pi m^2}p_zI_0\frac{(1+k)^3}{kd^2}  \\
	\beta_0 =\frac{1}{2}V'''(\bar{y})&=  -\frac{q}{2m^2}p_zA_0'''(\bar{y})= \frac{q\mu_0}{2\pi m^2}p_zI_0\frac{(1+k)^4(k-1)}{k^2d^3}\\
	\gamma_0 =\frac{1}{6}V^{(iv)}(\bar{y})&= \frac{q^2}{2m^2}A_0''(\bar{y})-\frac{q}{6m^2}p_zA_0^{(iv)}(\bar{y}) \\
	&=\frac{q\mu_0}{2\pi m^2}I_0^2\frac{(1+k)^4}{d^4k^2}\left[\frac{q\mu_0}{4\pi}(1+k)^2-p_z\frac{1+k^3}{kI_0}\right]
\end{align*}

%

From \cite{torres_zhang} we have that the trivial solution $Y=0$ (and hence the $T$-periodic solution $y_\eta(t)$) is of twist type if
\begin{equation}\label{rafateo}
\begin{split}
		&0<(\alpha_\eta)_*\leq (\alpha_\eta)^*<\left(\frac{\pi}{2T}\right)^2, \\
		&(\gamma_\eta)_*>0, \\
		&10(\beta_\eta)_*^2(\alpha_\eta)_*^{3/2}>9(\gamma_\eta)^*[(\alpha_\eta)^*]^{5/2}\quad \text{or} \quad 10(\beta_\eta^*)^2(\alpha_\eta^*)^{3/2}<9(\gamma_\eta)_*(\alpha_\eta)_*^{5/2} 
\end{split}
\end{equation}
where $f^*$ and $f_*$ respectively represent the supremum and infimum of $f\in\mathcal{C}\left([0,T];\mathbb{R}\right)$.
Summing up, we are ready for the

\begin{proof}[Proof of Theorem \ref{main}]
	Condition \eqref{nonres} implies that we can apply Theorem \ref{teoper} and get the periodic solutions $y_\eta(t)$ for $\eta<\eta_0$ such that
	\begin{equation}\label{unifep}
		\lim_{\eta\to 0} y_\eta(t) =\bar{y} \qquad\text{uniformly in }t.
	\end{equation}
	The first assumption in \eqref{stab} implies that both the first and the second condition in \eqref{rafateo} hold for $\alpha_0,\gamma_0$. The second hypothesis in \eqref{stab} implies that one of the two possibilities in the last condition in \eqref{rafateo} holds for $\alpha_0,\beta_0,\gamma_0$. Note that it is enough to check that $10\beta_0^2\neq 9\gamma_0\alpha_0$.  Since the remainders $\xi_\alpha,\xi_\beta,\xi_\gamma$ tend to $0$ uniformly as $\eta\to 0$ we get that there exists $\eta_1\leq\eta_0$ such that the periodic solutions $y_\eta(t)$ are of twist type $\eta<\eta_1$. 
	
	We prove that this implies the existence of the $p_z$-stable strip. We will proceed adapting an argument in \cite{SM} (see also \cite{garzon}). Fix $p_z$ satisfying \eqref{stab}. Consider the map $P_\eta(y_\eta(0),\dot{y}_\eta(0),p_z)=(y_\eta(T),\dot{y}_\eta(T),p_z)$ that maps one of these solutions with initial condition $(y_\eta(0),\dot{y}_\eta(0))$ and first integral $p_z$ to the corresponding values at time $T$. We claim that $q_\eta=(y_\eta(0),\dot{y}_\eta(0),p_z)$ is a stable fixed point for the map $P$. Actually, let us fix a neighborhood $\mathcal{U}$ of $q_\eta$. Since the point $(y_\eta(0),\dot{y}_\eta(0))$ is the initial condition of a periodic solution of \eqref{yeq12} of twist type for fixed values of $p_z$, we can find in $\mathcal{U}$ an invariant planar (i.e. with fixed $p_z$) region bounded by a closed curve surrounding $(y_\eta(0),\dot{y}_\eta(0))$ of the form
	\[
	\mathcal{U}_1=\{(y-y_\eta(0))^2+(\dot{y}-\dot{y}_\eta(0))^2\leq R(\Theta), \: p_z=p_{z_0}  \}\subset\mathcal{U},
	\]
	where $\Theta=\Theta(y_\eta(0),\dot{y}_\eta(0))$ represents the angle centered in $(y_\eta(0),\dot{y}_\eta(0))$ and $R(\cdot)$ represent the parametrization of such curve.
	Equation \eqref{yeq12} depends continuously on the parameters $p_z$, and condition \eqref{stab} holds in an open set. Hence by continuous dependence, we can find a family of curves
	\[
	(y-y_\eta(0))^2+(\dot{y}-\dot{y}_\eta(0))^2\leq R(\Theta,p_z)
	\]
	depending continuously on $p_z$ and with the properties above. Therefore, for sufficiently small $\delta$, the region
	\[
	(y-y_\eta(0))^2+(\dot{y}-\dot{y}_\eta(0))^2\leq R(\Theta,p_z), \: \: |\tilde{p}_z-p_{z}|<\delta \}
	\]
	is invariant and contained in $\mathcal{U}$. This implies that the point $q_\eta$ is stable under the map $P$. 
	Now fix $\epsilon>0$. By \eqref{unifep} we can find $\eta_{2}$ such that for $\eta<\eta_{2}$ 
	\[
	|\dot{y}_\eta(t)|+|y_\eta(t)-\bar{y}|<\epsilon/2.
	\]   
	By the stability of $q_\eta$, there exists $\delta$ such that if 	
	\begin{equation}\label{deltastab}
		|\dot{y}(0)-\dot{y}_\eta(0)|+|y(0)-y_\eta(0)|+|\tilde{p}_z-p_z|<\delta  
	\end{equation}
	then the solution of \eqref{xeq},\eqref{yeq},\eqref{zeq} with initial conditions
	$(0,0,y(0),\dot{y}(0),z(0),\tilde{p}_z)$, $z(0)\in\mathbb{R}$ satisfies
	\[
	x(t)\equiv 0, \qquad |\dot{y}(t)-\dot{y}_\eta(t)|+|y(t)-y_\eta(t)|<\epsilon/2 \quad \text{for every }t\in\mathbb{R}.  
	\]  
	This last two inequalities imply that, for $\eta<\eta_2$, the solutions with initial condition 
	$(0,0,y(0),\dot{y}(0),z(0),\tilde{p}_z)$, $z(0)\in\mathbb{R}$ satisfying \eqref{deltastab} satisfies 
	\[
	x(t)\equiv 0, \qquad |\dot{y}(t)|+|y(t)-\bar{y}|<\epsilon \quad \text{for every }t\in\mathbb{R}  
	\]
	that is the stability of the $y$ component in Definition \ref{def_stab}. To prove the condition on the $z$-component, we substitute this solutions in \eqref{zeq1} to get 
	\begin{equation}\label{stab_ze}
		m\dot{z}(t)=\tilde{p}_z-qA_0(y(t))-q\eta A(y(t))
	\end{equation}
	By the stability in the $y$ component, we write $y(t)=\bar{y}+\xi(t)$ with 
	\[
	|\dot{\xi}(t)|+|\xi(t)|<\epsilon\qquad \text{for every }t\in\mathbb{R}
	\]
	and expand $A_0(y(t))$ as in \eqref{expand}. Inserting in \eqref{stab_ze} we get
	\[
	|m\dot{z}(t)-p_z|\leq|\tilde{p}_z-p_z|+|qA_0(y(t))|+\eta \max_{|y-\bar{y}|<\epsilon}q|A(y)|< \epsilon+C\epsilon^2+\eta \max_{|y-\bar{y}|<\epsilon}q|A(y)|
	\]
	from which we get the thesis, eventually decreasing $\eta_2$ such that
	\[
	\eta_2  \max_{|y-\bar{y}|<\epsilon}|A(y)|<\epsilon.
	\]
	
\end{proof}
\section{Discussion and Future Work}

In this work, we have analyzed the stability of a charged particle confined between two infinite conducting wires carrying non-uniform currents. By employing the Newton-Lorentz equation and considering the time-dependent nature of the currents, we derived the conditions under which a stable region of motion exists. Our study identified a confinement zone where the charged particle's motion remains stable despite time-dependent perturbations in the current. Notably, we were able to establish the presence of a stable strip in the plane parallel to the wires, providing a deeper understanding of how non-stationary currents influence particle dynamics. 
The stability has been studied via the third approximation method developed by Ortega \cite{Ortega} and sharpened by Zhang \cite{zhang}. More precisely we have proved the existence of a periodic motion of the orthogonal component to the wires that is also of twist type. This implies the stability of the orthogonal component, in a stronger sense than the classical isoenergetic stability.
\par 

These findings contribute to both theoretical and applied electromagnetism, offering new insights into the design of magnetic confinement systems used in plasma physics and electromechanical devices. The results are relevant for improving the control of charged particles in magnetic confinement devices such as tokamaks and stellarators, as well as for applications in high-voltage transmission lines and electromagnetic levitation systems.

\par

Future research can extend the current study in several directions:
\begin{itemize}
	
	\item Relativistic regime: we studied the Newtonian regime, neglecting the relativistic effects on high velocities. However, it could be interesting to investigate these effects using the model defined in \cite{Planck,Po2} and following \cite{GT2023}.
	\item Complex Current Configurations: One potential avenue is to explore the stability of charged particles in more complex arrangements of currents, such as one-dimensional arrays of multiple wires or three-dimensional structures. This would provide insights into how more intricate magnetic field configurations affect particle confinement. Moreover, the eventual loss of symmetry would require a more complex analysis. 
	\item Stochastic Perturbations: Another promising direction is the inclusion of stochastic perturbations in the current or particle motion. Investigating how random fluctuations in the system influence stability could be crucial for applications where noise or uncertainty plays a significant role.
\end{itemize}

\section*{Funding}
S.M. has been supported by the project PID2021-128418NA-I00 awarded by the Spanish Ministry of Science and Innovation.

\end{document}